\newtheorem{Th}{Theorem}
\newtheorem{De}{Definition}
\newtheorem{Lemma}{Lemma}
\newtheorem{Cor}{Corollary}
\DeclareMathOperator{\rank}{rank}
\newcommand{\Za}{\mathcal{Z}_{\alpha}}
\newcommand{\Zaf}{\mathcal{Z}_{5\alpha}}
\newcommand{\cH}{{\mathcal H}}
\newcommand{\cZ}{\mathcal Z}
\newcommand{\R}{{\mathbb R}}
\newcommand{\Z}{{\mathbb Z}}
\newcommand{\Q}{{\mathbb Q}}
\newcommand{\C}{{\mathbb C}}
\newcommand{\gabg}{\mathcal{G}(g,\alpha,\beta)} 
\newcommand{\La}{\Lambda}
\newcommand{\la}{\lambda}
\begin{document}

\title{\bf\vspace{-39pt} Gabor frames with rational density.}
\author{YURII LYUBARSKII AND PREBEN GR{\AA}BERG NES} 
\date{}
\maketitle \thispagestyle{fancy}

\begin{abstract} We consider the frame property of the Gabor system
  $\mathcal{G}(g,\alpha,\beta) = \{ e^{2\pi i \beta n t} g(t - \alpha
  m) : m,n \in \Z\}$ for the case of rational oversampling, i.e.
  $\alpha, \beta \in \Q$. A 'rational' analogue of the Ron-Shen
  Gramian is constructed, and prove that for any odd window function
  $g$ the system $\mathcal{G}(g,\alpha,\beta)$ does not generate a
  frame if $\alpha \beta = \frac{n-1}{n}$. Special attention is paid
  to the first Hermite function $h_1(t) = t e^{-\pi
    t^2}$. \vspace{5mm}\\
  \noindent {\it Key words and phrases} : frames, Gabor analysis, Zak
  transform, Hermite functions, Ron-Shen Gramian, Zibulskii-Zeevi
  representation.  \vspace{3mm}\\
   \noindent {\it 2010 AMS Mathematics Subject Classification}
   --- 33C90, 42C15, 94A12
 \end{abstract}

\section{Introduction}\label{S:I}
One of the fundamental problems of Gabor analysis can be stated as
follows: given a window function $g\in L^2(\R)$, determine the set of
lattice parameters $\alpha,\beta {>}0$ such that the Gabor system
$\mathcal{G}(g{,}\alpha{,}\beta) {=} \{e^{2i\pi \beta n t}
g(t{-}\alpha m){:} m{,}n {\in} \Z\}$ forms a frame in $L^2(\R)$.
\begin{equation*}
  \mathcal{F}(g):= \{(\alpha,\beta)\in \R^2_+ :
  \mathcal{G}(g,\alpha,\beta) \ \mbox{is a frame for} \ L^2(\R)\}.
\end{equation*}
\indent We remind some known facts about the frame set
$\mathcal{F}(g)$, following \cite{kG01} and (in more compressed form)
\cite{kG11}. Under milder conditions, precisely if $g$ is in the
Feichtinger algebra $M^1$ the set $\mathcal{F}(g)$ is open in $\R^2_+$
and contains a neighborhood of the origin. Fundamental density
theorems \cite{iD92,kG01,cH07} together with a version of the
uncertainty principle \cite{jB95,wC06} asserts that also
$\mathcal{F}(g)\subset \Pi_+:= \{(\alpha, \beta)\in \R^2_+ :
\alpha\beta < 1\}$ for $g\in M^1$.\\ \indent Up to the very recent
time just few functions have been known for which
$\mathcal{F}=\Pi_+$. The list included the Gaussian $g(t)=e^{-\pi t2}$
\cite{yL92,kS92a,kS92b}, the hyperbolic secant $g(t)=
(e^t+e^{-t})^{-1}$, one- and two-sided exponential functions
$g(t)=e^{-t} {\mathbf 1}_{\R_+}$ (in this case $\alpha \beta = 1$ also
generates a frame) and $g(t)=e^{-|t|}$ \cite{aJ96,aJ02}, as well as
their shifts, dilates, and Fourier transforms. A breakthrough was
achieved in \cite{kG11} where the authors constructed an infinite
family of functions for which $\mathcal{F}(g)=\Pi_+$ by proving that
any totally positive function of finite type possesses this property.\\
\indent On the other hand it is shown in \cite{aJ03} that the set
$\mathcal{F}(g)$ may have rather complicated structure even for the
"simple" function such as the characteristic function $g= {\mathbf
  1}_I$ of an interval.\\ \indent In this article we attempt to study
the set $\mathcal{F}(g)$ for some cases when $\mathcal{F}(g)\neq
\Pi_+$ {\em and} $g$ is well concentrated both in time and
frequency. Our primary objective is the first Hermite function
$h_1(t)=te^{-\pi t2}$.  This choice is motivated by the uncertainty
principle ($h_1$ minimizes the Heisenberg uncertainty among all
functions which are orthogonal to the Gaussian) and also
recent results regarding vector-valued Gabor frames \cite{kG08}.\\
\indent The article is organized as follows.  The next section
contains notation and basic facts from Gabor analysis which will be
used in the sequel. The whole analysis is carried out for the case of
rational oversampling: $\alpha\beta = p/q \in \Q$.  In Section
\ref{S:III} we prove that for {\em any} odd function $g\in M^1$ (in
particular $h_1$ of course) the set $\mathcal{F}(g)$ does NOT contain
the union of hyperbolas:
\begin{equation}\label{not}
\alpha \beta = \frac{n-1}n \  \Rightarrow \ (\alpha,\beta) \not\in
\mathcal{F}(g), \ n=2,3, \ldots \ .
\end{equation}
The proof is based on analysis of the vector-valued Zak transform (see
\cite{mZ97} and also \cite[ch.~8]{kG01}) which represents the frame
operator as matrix multiplication in a space of vector-valued
functions.  In the next section we factorize the matrix of the
vector-valued Zak transform and extract a factor which is a rational
analogue of the well-known Ron-Shen Gramian (see \cite{aR97} and also
\cite{kG01}).  In Section \ref{S:V} we conjecture that condition
\eqref{not} is the only restriction on the set $\mathcal{F}(h_1)$:
\begin{equation*}
  \mathcal{F}(h_1)= \{(\alpha,\beta)\in \Pi_+: \ \alpha\beta\neq \frac{n-1}n, \
  n=2,3, \ldots \ \}. 
\end{equation*}
Unfortunately we are not able to prove this conjecture in its full
range. We prove it analytically just for some points in $\Pi_+$ and
also provide numerical verification for a wider set of points. These
constructions are based on the rational analogue of the Gramian given
in Section \ref{S:IV}.\\ \indent The authors thank E.Malinnikova for
useful discussions and hints.
\section{Preliminaries}\label{S:II}
In this section we remind the basic facts from Gabor analysis which
will be used later.  We refer the reader to \cite{kG01} for a more
detailed presentation as well as the history of the subject.\\
\indent Given numbers $\alpha, \beta >0$ and a window function $g$ we
consider the lattice $\La= \alpha \Z \times \beta \Z$ and the Gabor
system
\begin{equation*}
  \gabg= \{ \pi_\la g: \ \la \in \La\},
\end{equation*}
where $\pi_{\la}:g \rightarrow e^{2\pi i b t}g(t-a)$ for $\la = (a,b)$
denotes the usual time-frequency shift.  The Gabor frame
operator $S_{g, \Lambda}: L^2(\R) \to L^2(\R)$ is defined as
\begin{equation*}
  S_{g, \Lambda} f(t) \,{=}\, \sum_{\lambda
    \,{\in}\, \La} \langle f,\pi_{\lambda}g \rangle_{L^{2}(\mathbb{R})}
  \pi_{\lambda}g(t), \qquad f \,{\in}\, L^2(\mathbb{R}).
\end{equation*} 
If $g$ belongs to the modulation space $M^1(\mathbb{R})$ (we remind
the definition later in this section) this operator is bounded, and
$\mathcal{G}(g,\Lambda)$ is a Gabor frame if and only if the Gabor
frame operator is invertible.  In this article we consider the case
$\alpha \beta \in \Q$. The operator $S_{g, \Lambda} $ can in this case
be realized as a
multiplication-operator in a space of vector-valued functions.\\
\indent Let $\alpha\beta = p/q$ for some relatively primes $p,q\in
\mathbb{N}$. Consider the rectangle $Q_{\alpha, \beta}= [0,\alpha/p)
\times [0, 1/\alpha)$ and the space of
vector-valued functions $\cH_{\alpha,p}= L^2(Q_{\alpha,p}, \C^p)$.\\
\indent We remind that the Zak transform is defined as
\begin{equation}\label{eq:zak} 
  \mathcal{Z}_{\alpha}f(t,\omega) = \sum_{n \in
    \mathbb{Z}} f(t - \alpha n)e^{2\pi i n \alpha \omega}.
\end{equation} 
Following \cite[ch.~8]{kG01} we consider the {\em vector-valued} Zak
transform $\overrightarrow{\cZ}_{\alpha} : L^2(\R) \to \cH_{\alpha,p}$
defined as
\begin{equation*}
  \overrightarrow{\cZ}_{\alpha}f(x,\omega)
  = \big( \cZ_\alpha(x+\frac{\alpha}{p} r, \omega) \big)_{r=1}^p, \quad (x,\omega)
  \in Q_{\alpha,p} 
\end{equation*}
The vector-valued Zak transform is up to normalization a unitary
mapping between $L^2(\R)$ and $\cH_{\alpha,p}$.\\ \indent Denote also
\begin{equation*}
  A_r^s(x,\omega)= \alpha \sum_{j=0}^{q-1} \overline {\Za
    g(x+\frac{\alpha}{p}s, \omega - \beta j)} \Za g
  (x+\frac{\alpha}{p} r,\omega - \beta j) e^{2\pi i j (r-s)/q},
\end{equation*}
and consider the $p\times p$ matrix function
\begin{equation*}
  \mathcal A (x,\omega ) = \left ( A_r^s(x,\omega) \right )_{r,s =
    0}^{p-1}, \ \ (x,\omega) \in Q_{\alpha,p}.
\end{equation*}
\noindent{\bf Theorem A}( Zibulskii-Zeevi) (see \cite{mZ97} and also
Theorem 8.3.3. in \cite{kG01}). {\em With the above assumptions we
  have }
\begin{equation*}
  \overrightarrow{\cZ}_{\alpha} (S_{g, \alpha,\beta } f)
  (x,\omega)= \mathcal A (x,\omega ) \overrightarrow{\cZ}_{\alpha}
  f(x,\omega),
\end{equation*}  
{\em for almost all }$(x,\omega)\in Q_{\alpha,p}$.
\medskip\\ In what follows we assume that $g $ belongs to the
modulation space $M^1(\mathbb{R})$.
\begin{De}\label{De:M}
  The modulation space $M^1(\mathbb{R})$ consists functions $g$ for
  which the norm
  \begin{equation*}
    \| g \|_{M_1(\mathbb{R})} = \int_{\mathbb{R}} \int_{\mathbb{R}}
    |V_fg(x,\omega)|\; dx \; d\omega < \infty
  \end{equation*}
  for some (or equivalently all) non-trivial function $f$ in the
  Schwartz space $\mathcal{S}(\R)$.
\end{De}
If $g\in M^1(\mathbb{R})$ then $\Za g$ is continuous.\medskip
\begin{Cor}
  The Gabor system $\mathcal{G}(g,\Lambda)$ is a frame in $L^2(\R)$ if
  and only if 
  \begin{equation}\label{framea} 
    \mbox{det}{\mathcal A}(x,\omega) \neq 0, \qquad (x,\omega
    )\in Q_{\alpha,p}.
  \end{equation}
\end{Cor}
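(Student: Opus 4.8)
The plan is to transfer the frame criterion through the vector-valued Zak transform and then reduce bounded invertibility of the resulting multiplication operator to a pointwise condition on the symbol $\mathcal{A}$. Recall (from the discussion preceding Theorem A) that $\mathcal{G}(g,\La)$ is a frame precisely when the frame operator $S_{g,\La}$ is boundedly invertible on $L^2(\R)$. By Theorem A, the transform $\overrightarrow{\cZ}_{\alpha}$ intertwines $S_{g,\La}$ with the operator of pointwise multiplication by $\mathcal{A}(x,\omega)$ on $\cH_{\alpha,p}$; since $\overrightarrow{\cZ}_{\alpha}$ is, up to a normalizing constant, a unitary map from $L^2(\R)$ onto $\cH_{\alpha,p}$, it follows that $S_{g,\La}$ is boundedly invertible if and only if the multiplication operator $M_{\mathcal{A}}\colon F \mapsto \mathcal{A}(\cdot)\,F(\cdot)$ is boundedly invertible on $\cH_{\alpha,p}=L^2(Q_{\alpha,p},\C^p)$.

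Next I would characterize invertibility of $M_{\mathcal{A}}$. One checks directly from the definition of $A_r^s$ that $\overline{A_r^s}=A_s^r$, so $\mathcal{A}(x,\omega)$ is Hermitian, and since $\langle S_{g,\La}f,f\rangle=\sum_{\la}|\langle f,\pi_{\la}g\rangle|^2\ge 0$ the matrix is positive semidefinite for a.e. $(x,\omega)$. The relevant general fact is that a multiplication operator $M_{\mathcal{A}}$ with bounded symbol is boundedly invertible on $L^2(Q_{\alpha,p},\C^p)$ if and only if $\mathcal{A}(x,\omega)$ is invertible for a.e. $(x,\omega)$ with $\operatorname{ess\,sup}_{(x,\omega)}\|\mathcal{A}(x,\omega)^{-1}\|<\infty$, the inverse then being $M_{\mathcal{A}^{-1}}$. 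For a positive semidefinite Hermitian matrix, invertibility is equivalent to $\det\mathcal{A}(x,\omega)\neq 0$ (equivalently $\det\mathcal{A}>0$), which is exactly condition \eqref{framea} read pointwise.

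It then remains to upgrade the pointwise nonvanishing of the determinant to the \emph{uniform} lower bound required above, and this is where the hypothesis $g\in M^1(\R)$ is used. Since $\Za g$ is continuous, each $A_r^s$ is continuous and hence $(x,\omega)\mapsto\det\mathcal{A}(x,\omega)$ is continuous; moreover the quasi-periodicity of the Zak transform causes its phase factors to cancel in $\det\mathcal{A}$, so that $\det\mathcal{A}$ is periodic with $Q_{\alpha,p}$ as a fundamental domain and extends continuously to the compact closure $\overline{Q_{\alpha,p}}$. Consequently, nonvanishing of $\det\mathcal{A}$ on $Q_{\alpha,p}$ forces $|\det\mathcal{A}|$ to attain a strictly positive minimum on $\overline{Q_{\alpha,p}}$, which, together with the uniform bound on the entries of $\mathcal{A}$, yields $\operatorname{ess\,sup}\|\mathcal{A}^{-1}\|<\infty$; conversely, a zero of $\det\mathcal{A}$ at a single point already destroys bounded invertibility of $M_{\mathcal{A}}$. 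The main obstacle is precisely this last equivalence between a pointwise condition and uniform bounded invertibility: it hinges on continuity (supplied by $g\in M^1$) and on the periodicity/compactness that lets one pass from $Q_{\alpha,p}$ to its closure, and one must confirm that the quasi-periodic phases indeed cancel so that $\det\mathcal{A}$ is genuinely periodic and continuous up to the boundary.
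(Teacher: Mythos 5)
Your proof is correct and takes exactly the route the paper intends: the paper states this corollary without a separate proof, as an immediate consequence of Theorem A together with the near-unitarity of $\overrightarrow{\cZ}_{\alpha}$ and the continuity of $\Za g$ for $g\in M^1$, and your argument (conjugating $S_{g,\La}$ to the multiplication operator $M_{\mathcal A}$, then using continuity, periodicity and compactness to pass between pointwise nonvanishing of $\det\mathcal A$ and uniform bounded invertibility) is precisely that standard justification spelled out. The phase cancellation you flag does hold: since $\alpha\beta=p/q$, one checks that $\mathcal Q(x+\tfrac{\alpha}{p},\omega)$ equals $\mathcal Q(x,\omega)$ up to a cyclic row permutation and unitary diagonal factors on rows and columns, so $\mathcal A=\mathcal Q\mathcal Q^T$ changes only by unitary conjugation and $\det\mathcal A$ is genuinely $\tfrac{\alpha}{p}$-periodic in $x$ (and $\tfrac{1}{\alpha}$-periodic in $\omega$), as required.
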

\medskip We factorizes the matrix $\mathcal{A}$ in order to make
condition \eqref{framea} more transparent. Consider the column vectors
\begin{equation*}
  X^{j}(x,\omega) = \left ( X^{j}_r(x,\omega)\right
  )_{r=0}^{p-1}, \qquad \ j=0,1,\ldots \ , q-1 
\end{equation*}
where
\begin{equation}\label{column2} 
  X^{j}_r(x,\omega)= \Za g(t+\frac{\alpha r}p,
  \omega-\beta j)e^{2\pi i jr/q},
\end{equation}
and the $p\times q$ matrix
\begin{align*}
  \mathcal {Q}(t,\omega) = \left ( X^{j} \right )_{j=0}^{q-1}=\Big(
  \mathcal{Z}_{\alpha}g(t + \frac{\alpha r}{p}, \omega - \beta j)
  e^{2\pi i j r/q} \Big)_{r = 0,j=0}^{p-1,q-1}.
\end{align*}
Let $\mathcal{Q}^T$ denote the conjugate transform of
$\mathcal{Q}$. Clearly
\begin{equation*}
  \mathcal {A}(x,\omega)= \mathcal{Q}(x,\omega)
  \mathcal{Q}^T(x,\omega).
\end{equation*}
We have
\begin{equation*}
  \mathcal A (x,\omega )
  \overrightarrow{\cZ}_{\alpha} f(x,\omega) = \sum_{j=0}^{q-1}
  \langle X^{j}(x,\omega), \overrightarrow{\cZ}_{\alpha} f(x,\omega) \rangle
  \langle X^{j}(x,\omega), \overrightarrow{\cZ}_{\alpha} f(x,\omega) \rangle,
\end{equation*} 
so the condition \eqref{framea} is met if and only if for each
$(x,\omega)\in Q_{\alpha,p}$ the vectors $X^{j}(x,\omega), \
j=0,1,\ldots q-1$ span $\C^p$.\medskip
\begin{Cor}\label{cor:frameb}
  Let $\alpha \beta = \frac p q \in \Q$ and $g\in M^1(\mathbb{R})$. If
  $\mathcal{G}(g,\Lambda)$ is a frame in $L^2(\R)$ it is necessary and
  sufficient that
  \begin{equation*}
    \mbox{rank}\mathcal {Q}(x,\omega)= p, \qquad {for \ all} \
    (x,\omega) \in Q_{\alpha, p}.
  \end{equation*}
\end{Cor}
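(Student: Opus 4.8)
The plan is to read the statement off the two facts already in place: Corollary 1, which says $\mathcal{G}(g,\La)$ is a frame precisely when $\det\mathcal{A}(x,\omega)\neq 0$ for all $(x,\omega)\in Q_{\alpha,p}$, and the factorization $\mathcal{A}=\mathcal{Q}\mathcal{Q}^T$ whose columns are the vectors $X^{j}$. Once these are granted, the whole corollary reduces to a pointwise linear-algebra equivalence: I claim that for each fixed $(x,\omega)$ the Hermitian positive semidefinite $p\times p$ matrix $\mathcal{A}=\mathcal{Q}\mathcal{Q}^T$ is invertible if and only if the $p\times q$ matrix $\mathcal{Q}$ has full row rank $p$.

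The key is the elementary fact that $\rank(\mathcal{Q}\mathcal{Q}^T)=\rank\mathcal{Q}$ for any complex $p\times q$ matrix $\mathcal{Q}$, with $\mathcal{Q}^T$ the conjugate transpose. To see this I would show $\ker(\mathcal{Q}\mathcal{Q}^T)=\ker\mathcal{Q}^T$: one inclusion is immediate, and for the other, $\mathcal{Q}\mathcal{Q}^T v=0$ gives $0=\langle \mathcal{Q}\mathcal{Q}^T v,v\rangle=\|\mathcal{Q}^T v\|^2$, whence $\mathcal{Q}^T v=0$. Rank--nullity then yields $\rank(\mathcal{Q}\mathcal{Q}^T)=p-\dim\ker\mathcal{Q}^T=\rank\mathcal{Q}^T=\rank\mathcal{Q}$. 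Since $\mathcal{A}$ is a square $p\times p$ matrix, $\det\mathcal{A}(x,\omega)\neq 0$ is equivalent to $\rank\mathcal{A}(x,\omega)=p$, and by the above this is equivalent to $\rank\mathcal{Q}(x,\omega)=p$ --- equivalently, the $q$ columns $X^{j}(x,\omega)$ span $\C^p$, exactly as noted just before the statement.

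Finally I would feed this equivalence, pointwise over $(x,\omega)\in Q_{\alpha,p}$, into Corollary 1 to conclude that $\mathcal{G}(g,\La)$ is a frame if and only if $\rank\mathcal{Q}(x,\omega)=p$ for every $(x,\omega)\in Q_{\alpha,p}$. I do not anticipate a real obstacle, since this is a genuine corollary; the only point to keep in mind is that the rank/determinant equivalence is purely pointwise, so no uniformity beyond that already supplied by Corollary 1 is required. It is also worth recording the sanity check that a frame forces $p<q$, consistent with $\alpha\beta<1$ for $g\in M^1$, as a $p\times q$ matrix can attain row rank $p$ only when $q\ge p$.
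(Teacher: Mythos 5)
Your proposal is correct and follows essentially the same route as the paper: both pass through the Zibulskii--Zeevi criterion $\det\mathcal{A}(x,\omega)\neq 0$ and the factorization $\mathcal{A}=\mathcal{Q}\mathcal{Q}^T$, the paper phrasing the pointwise equivalence as ``$\mathcal{A}$ is invertible iff the columns $X^{j}$ span $\C^p$'' while you phrase it as $\ker(\mathcal{Q}\mathcal{Q}^T)=\ker\mathcal{Q}^T$ --- the same positive-semidefiniteness argument, which you in fact spell out more carefully than the paper does.
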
 
In the next section we use this condition in order to study
Gabor systems generated by odd functions.
\section{Gabor frames generated by odd functions}\label{S:III}
In this section we prove the following
\begin{Th}\label{th:symmetry}
  Let $g\in M^1(\mathbb{R})$ be an odd function and $\alpha \beta =
  \frac {n-1} n$, $n = 2, 3, \dots $.  Then $\mathcal{G}(g,\Lambda)$
  cannot form a frame in $L^2(\R)$.
 \end{Th}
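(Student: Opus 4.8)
The plan is to use the rank criterion from Corollary \ref{cor:frameb}: the system $\gabg$ fails to be a frame precisely when $\rank \mathcal{Q}(x,\omega) < p$ at some point $(x,\omega) \in Q_{\alpha,p}$. Writing $\alpha\beta = \frac{n-1}{n}$ in lowest terms, we have $p = n-1$ and $q = n$, so $\mathcal{Q}$ is a $(n-1)\times n$ matrix whose columns are the vectors $X^{j}(x,\omega)$ for $j = 0,\dots,n-1$, with entries built from $\Za g$. The strategy is to exhibit a single point where all $n-1$ rows of $\mathcal{Q}$ collapse — i.e. a point where every column $X^j$ lives in a proper subspace of $\C^{n-1}$, so the rank drops below $p$.

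**Exploiting oddness via the Zak transform.** The key structural input should be a symmetry of the Zak transform induced by $g$ being odd. Since $g(-t) = -g(t)$, I would first derive the quasi-periodicity/symmetry relation for $\Za g$ under the reflection $(t,\omega) \mapsto (-t,-\omega)$; concretely, substituting $n \mapsto -n$ in \eqref{eq:zak} should give something like $\Za g(-t,-\omega) = -\Za g(t,\omega)$ up to a phase factor coming from the quasi-periodicity $\Za g(t+\alpha,\omega) = e^{2\pi i \alpha \omega}\Za g(t,\omega)$ and $\Za g(t,\omega+1/\alpha)=\Za g(t,\omega)$. The heart of the matter is to locate a special point $(x_0,\omega_0)$ — I expect a center of symmetry of the fundamental domain, something like $x_0 = 0$ (or a half-period) and $\omega_0 = 0$ — that is mapped to itself (modulo the lattice periods) by this reflection. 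At such a fixed point, the symmetry forces $\Za g$ to vanish, or forces linear relations among the shifted values $\Za g(x_0 + \tfrac{\alpha r}{p}, \omega_0 - \beta j)$ that appear as the entries of $\mathcal{Q}$.

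**The rank collapse.** With such a self-symmetric point in hand, I would show the antisymmetry pairs up the entries of $\mathcal{Q}(x_0,\omega_0)$: the reflection permutes the index pairs $(r,j)$ (combined with the phases $e^{2\pi i jr/q}$) in a way that, together with the minus sign from oddness, yields a nontrivial linear dependence among the rows of $\mathcal{Q}$. The arithmetic of $\alpha\beta = \frac{n-1}{n}$ is exactly what makes the reflected index set close up: the grids $\{x_0 + \tfrac{\alpha r}{p}\}$ and $\{\omega_0 - \beta j\}$ must be invariant under reflection modulo the periods, and one checks this holds for the denominator $q = n$ paired with numerator $p = n-1$. The cleanest route is probably to pin down $(x_0,\omega_0)$ so that each column $X^j(x_0,\omega_0)$ is forced (by the antisymmetry relating its $r$-th and $(p-r)$-th components) to lie in the $\lfloor p/2\rfloor$-dimensional subspace of vectors with the appropriate antisymmetry; since $q = n$ columns cannot span $\C^{p} = \C^{n-1}$ once they are confined to such a proper subspace, the rank drops and Corollary \ref{cor:frameb} denies the frame property.

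**Anticipated obstacle.** The main difficulty, and where I would spend the most care, is the bookkeeping that identifies the correct self-symmetric point and verifies that the combination of the reflection symmetry of $\Za g$, the quasi-periodicity phases, and the extra factors $e^{2\pi i jr/q}$ in \eqref{column2} genuinely produces a \emph{rank-deficiency} rather than a harmless relabeling. In particular one must rule out that the induced permutation of columns/rows is "diagonal" (pure phase) in a basis where no dependence appears; it is the precise matching $q - p = 1$ for the hyperbola $\alpha\beta = \frac{n-1}{n}$ that should guarantee the antisymmetric constraint strictly lowers the attainable rank. Verifying this counting — that $n$ columns confined to the antisymmetric subspace of $\C^{n-1}$ cannot achieve full rank $n-1$ — is the crux, and I would handle it by an explicit dimension count at the fixed point.
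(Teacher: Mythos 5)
Your overall route coincides with the paper's: evaluate at the reflection fixed point $(0,0)$, exploit the antisymmetry of the Zak transform induced by oddness ($\Za g(-t,-\omega)=-\Za g(t,\omega)$ is indeed correct, no extra phase), and conclude rank deficiency via Corollary \ref{cor:frameb}. But the central linear-algebra claim on which you hang the conclusion is false. Oddness does \emph{not} force each column $X^{j}(0,0)$ to be antisymmetric in its own row index; what it forces (the paper's Lemma \ref{l:symmetry}) is the cross relation $X^{j}_{s}=-X^{q-j}_{p-s}$, which couples column $j$ to column $q-j$ and row $s$ to row $p-s$. Only the self-paired columns, $j=0$ and (for even $q$) $j=q/2$, enjoy an internal antisymmetry $X^{j}_{s}=-X^{j}_{p-s}$. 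Consequently the $q=n$ columns are \emph{not} confined to a common $\lfloor p/2\rfloor$-dimensional antisymmetric subspace, and the dimension count that you identify as the crux never gets off the ground. A sanity check that something is wrong: if your claim held, the rank at $(0,0)$ would be at most $\lfloor (n-1)/2\rfloor$, a massive deficiency; in reality the failure at $\alpha\beta=(n-1)/n$ is marginal (by a single dimension), which is consistent with the paper's numerics showing $\mathcal{G}(h_1,\alpha,\beta)$ is a frame just off these hyperbolas.

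The bookkeeping that actually works (and is what the paper does) uses the pairing, not per-column confinement. Work with the rows $R_s$ of $\mathcal Q(0,0)$. The cross relation, together with the extra identities at self-paired indices (e.g.\ $X^0_0=0$ and $X^{k+1}_0=0$ when $q=2k+2$), gives: (i) each self-paired row ($R_0$, and also $R_{p/2}$ when $p$ is even) lies in the space $E$ of vectors $v\in\C^q$ satisfying $v_j=-v_{q-j}$ and vanishing at the self-paired column indices, which has $\dim E=\lfloor (q-1)/2\rfloor$; (ii) for every pair $\{s,p-s\}$ the sum $R_s+R_{p-s}$ also lies in $E$. Hence the row space is contained in the span of $\lfloor (p-1)/2\rfloor$ representatives $R_s$ together with $E$, so $\rank\mathcal Q(0,0)\le \lfloor (p-1)/2\rfloor+\lfloor (q-1)/2\rfloor$. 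It is precisely here that the hypothesis $q=p+1$ enters: then the bound equals $\lfloor (p-1)/2\rfloor+\lfloor p/2\rfloor=p-1<p$, whereas for larger $q$ it is vacuous. So your instinct that $q-p=1$ is the decisive arithmetic is sound, but it must feed this pairing count; the step you proposed in its place is not true and would fail if you tried to verify it.
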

 \begin{proof} We will prove that
   \begin{equation}\label{smallrank}
     \rank \mathcal Q(0,0)< n-1.
   \end{equation}
   The results then follows from Corollary \ref{cor:frameb}.\\ \indent
   Relation \eqref{smallrank} will follow from the fact that for odd
   windows the elements of the matrix $\mathcal Q(0,0)$ posses
   additional symmetries. For simplicity we will assume $\alpha=1$.
   \begin{Lemma}\label{l:symmetry}
     Let $g\in M^1(\mathbb{R})$ be an odd function, and let
     $\alpha=1$, $\beta=p/q \in \Q$ and $ X^j_s=X^{j}_s(0,0), $ where
     the functions $X^{j}_s(x,\omega)$ are defined by \eqref{column2}.
     Then
     \begin{equation}\label{symmetry} 
       X_s^j=-X^{q-j}_{p-s}, \ s=0,1,\ldots \ , p-1, \
       j=0,1, \ldots \ , q-1.
     \end{equation}
   \end{Lemma}
   The proof of the lemma follows readily from the definition of the
   Zak transform and also from the fact that $g$ is odd.  We will use
   this lemma for $q-p=1$.\\ \indent First we consider the case
   $q=2k+2$, $p=2k+1$ ($q$ is an even number).  In this case $\mathcal
   Q (0,0)$ is a $(2k+1)\times (2k+2)$ matrix.\\ \indent We need
   additional relations for the elements of the zero row, the zero
   column and also the $(k+1)$-th column of $\mathcal Q (0,0)$.
   Namely
   \begin{align}
     X_0^0=0, \ X^{k+1}_0=0, \ X_0^j=-X_0^{q-j} \quad &\text{-- zero
       row}\label{zerorow}\\
     X^0_s=-X^0_{p-s}, \ s=1, \ldots \ ,p-1 \quad
     &\text{-- zero column};\label{zerocolumn}\\
     X^{k+1}_s=-X^{k+1}_{p-s}, \ s=1,\ldots \, p-1 \quad &\mbox{--
       $-(k+1)$th column.}\label{nplusonecolumn}
   \end{align} 
   As in Lemma \ref{l:symmetry} these relations follow readily from
   the definition of the Zak transform.\\ \indent Let $R_s$ denote the
   $s$-th row of $\mathcal Q(0,0)$. Consider the row vectors
   $e_l=(e_l^j)_{j=0,1,\ldots 2k+2}$, $l=1,2, \ldots \ ,k$, where
   $e_l^j=0$, for $j\neq l+1, 2k+2-l$, $e_l^{l+1}=1$, and
   $e_l^{2k+2-l}=-1$.\\ \indent From the relations \eqref{symmetry},
   \eqref{zerorow}, \eqref{zerocolumn}, and \eqref{nplusonecolumn} it
   is easy to see that all rows of $\mathcal Q$ belong to ${\mathcal S
     = \mbox{span} \left \{ \{R_s\}_{s=1}^k\} \cup \{ e_l\}_{l=0}^k
     \right \}}$.\\ \indent Indeed the row $R_0$ has the form
   \begin{equation}\label{form} 
     R=(0, \alpha_1, \ldots \, \alpha_k, 0, -\alpha_k,
     \ldots , \ -\alpha_1)
   \end{equation} 
   for some $\alpha_1, \ldots \ , \alpha_k$, thus the vector can be
   spanned by $ \{ e_l\}_{l=0}^n$. The rows $R_s$ $s=1,\ldots \ ,k$
   belong to the spanning set themselves. So it suffices to prove that
   the rows $R_{p-s}$, $s=1, \ldots k$ also belong to $\mathcal S$ or
   equivalently the vectors $R_s+R_{p-s}$ belong to $\mathcal S$. The
   later is evident since, according \eqref{symmetry} and
   \eqref{zerocolumn}, these vectors also have the form
   \eqref{form}.\\ \indent This completes the proof of the Theorem in
   the case $q=2k+2$, $p=2k+1$.\\ \indent Consider now the case
   $p=2k$, $q=2k+1$ ($q$ is an odd number).  Once again, in addition
   to the general relation \eqref{symmetry}, we need relations for
   selected rows and columns:
   \begin{align*}
     X_0^j=-X^{q-j}_0, \quad &\mbox{-- zero row};\\
     X_{s}^0=-X^0_{p-s} \quad &\mbox{-- zero column};\\
     X^j_k=-X_k^{q- j} \quad &\mbox{-- $k$-th row}.
   \end{align*}
   Consider now the rows $e_l=(e_l^j)_{j=0,1 \ldots 2k}$,
   $l=1,2,\ldots \, k$ with $e_l^j=0$ if $j\neq l, 2k-l+1$, $e_l^j=1$,
   and $e_l^{2k-l+1}=-1$.\\ \indent Applying the same arguments as in
   the previous case we can see that the set of $2k-1$ vectors
   $\{R_s\}_{s=1,\ldots \ , k-1} \cup \{e_l\}_{l=1, \ldots \ , k}$
   spans all rows of the matrix $\mathcal Q(0,0)$.\\ \indent This
   completes the proof of Theorem \ref{th:symmetry}.
 \end{proof} 
 \section{Factorization of the Zibulskii-Zeevi matrix}\label{S:IV}
 In this section we study the Zibulskii-Zeevi matrix $\mathcal
 Q(x,\omega)$.  Our goal is to reduce it to a simpler $p\times q$
 matrix having the same rank as $\mathcal Q$. One can consider this
 simpler matrix as an analogue of the Ron-Shen Gramian \cite{aR97} for
 rationally oversampled Gabor systems.\\ \indent We believe this
 reduction is interesting by itself, it will be also used in the next
 section in order to study Gabor frames generated by the first Hermite
 function.
 \begin{Th}\label{th:LNrep}
   Let the window function $g$ belong to $M^1(\mathbb{R})$ and
   $\alpha\beta = \frac p q \in \Q$. The system $\gabg$ forms a frame
   in $L^2(\R)$ if and only if the matrix
   \begin{equation}\label{eq:LNrep}
     \mathcal P (x,\omega)= \left (Z_{\alpha q}g(x+\frac
       \alpha p (tp+sq), \omega) \right )_{s=0\ \ t=0 }^{p-1\ q-1} 
   \end{equation} 
   has $\rank = p$ for all $(x,\omega)\in Q_{\alpha,p}$ .
 \end{Th}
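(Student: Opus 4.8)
The plan is to derive the statement directly from Corollary \ref{cor:frameb}, which already characterizes the frame property by $\rank \mathcal Q(x,\omega)=p$. Thus it suffices to prove that for every fixed $(x,\omega)\in Q_{\alpha,p}$ one has $\rank \mathcal Q(x,\omega)=\rank \mathcal P(x,\omega)$, and I would do this by exhibiting invertible matrices, permutations, and nonvanishing diagonal scalings transforming one into the other. The analytic engine is the refinement identity relating the Zak transform at scale $\alpha$ to the one at scale $\alpha q$: splitting $n=qm+k$ with $0\le k\le q-1$ gives
\[
  \Za g(t,\omega) = \sum_{k=0}^{q-1} e^{2\pi i\alpha k\omega}\, Z_{\alpha q} g(t - \alpha k, \omega).
\]
Substituting this into the entries \eqref{column2} of $\mathcal Q$ at $(x+\tfrac{\alpha r}{p},\omega-\beta j)$ and using two elementary properties of $Z_{\alpha q}$ — its periodicity in $\omega$ with period $1/(\alpha q)$, which together with $\beta=p/(\alpha q)$ makes $\beta j$ an integer multiple of that period and removes the $j$-dependence of the $Z_{\alpha q}$-factors, and its quasiperiodicity $Z_{\alpha q}g(t+\alpha q,\omega)=e^{2\pi i\alpha q\omega}Z_{\alpha q}g(t,\omega)$ in the first variable — I obtain
\[
  \mathcal Q_{r,j}(x,\omega) = \sum_{k=0}^{q-1} e^{2\pi i\alpha k\omega}\, Z_{\alpha q}g\Big(x + \tfrac{\alpha r}{p} - \alpha k,\ \omega\Big)\, e^{2\pi i j(r - kp)/q}.
\]

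The next step is to recognize the sum over $k$ and the parameter $j$ as multiplication by an invertible matrix: since $\gcd(p,q)=1$, the factor $e^{2\pi i j(r-kp)/q}$ is, for each row $r$, a size-$q$ DFT matrix with permuted columns, hence invertible and rank-neutral. I would then organize everything by the single index $m=r-kp$. As $(r,k)$ runs over $\{0,\dots,p-1\}\times\{0,\dots,q-1\}$ the integer $m$ runs over a complete residue system modulo $pq$, with row $r$ of $\mathcal Q$ collecting exactly those $m$ with $m\equiv r\pmod p$. The entries of $\mathcal P$ are indexed by $m'=tp+sq$, which by the Chinese remainder theorem is also a complete residue system modulo $pq$, with row $s$ collecting those $m'\equiv sq\pmod p$. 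Matching rows through the permutation $s\mapsto sq\bmod p$ and columns through $m\equiv m'\pmod{pq}$ identifies the two families of values $Z_{\alpha q}g(x+\tfrac{\alpha m}{p},\omega)$ up to the quasiperiodic phase $e^{-2\pi i\alpha q\omega\,\nu}$ arising from the fold $m=m'-pq\,\nu$; the surviving column transform is the invertible matrix $\Psi_{t,j}=e^{2\pi i jtp/q}$.

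The step I expect to be the main obstacle is controlling these accumulated phases. A priori the entrywise factors relating $\mathcal Q$ to $\mathcal P$ form a full $p\times q$ array of nonzero numbers, and a Hadamard modulation by such an array need not preserve rank. The crucial point is that this array is \emph{separable}. Combining the modulation $e^{2\pi i\alpha k\omega}$ (with $k=k(s,t)=q\nu-t-\lfloor sq/p\rfloor$ recovered from the fold) with the quasiperiodic phase $e^{-2\pi i\alpha q\omega\nu}$, the dependence on the folding integer $\nu$ cancels and one is left with the factor $a_s b_t$, where $a_s=e^{-2\pi i\alpha\omega\lfloor sq/p\rfloor}$ and $b_t=e^{-2\pi i\alpha\omega t}$, both nonvanishing. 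Hence the transition $\mathcal Q\mapsto\mathcal P$ is realized by a row permutation, the nonsingular scalings $\mathrm{diag}(a_s)$ and $\mathrm{diag}(b_t)$, and the invertible DFT factor $\Psi$, all rank-preserving. Verifying this cancellation carefully — that the phases assemble into a row scaling times a column scaling rather than a genuine two-dimensional modulation — is the heart of the argument; once established, $\rank\mathcal Q(x,\omega)=\rank\mathcal P(x,\omega)$ pointwise and the theorem follows from Corollary \ref{cor:frameb}.
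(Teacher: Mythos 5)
Your proposal is correct and follows essentially the same route as the paper's own proof: both split the $\alpha$-Zak transform into $Z_{\alpha q}$-pieces indexed by residues mod $q$ (your refinement identity is exactly the paper's partition of the translation indices $L(s)$ into the sets $L(s,t)$), both use $\gcd(p,q)=1$ to pull out an invertible DFT-type factor, and both identify the remaining matrix with $\mathcal P$ via Chinese-remainder re-indexing up to row/column permutations and diagonal phase scalings. The only real difference is bookkeeping: the paper chooses the CRT-adapted representatives $k_t p + m_s q$ inside each sub-sum from the start, so the separability of the phases into a row factor times a column factor (the cancellation of the fold index $\nu$ that you verify by hand) comes out automatically rather than needing a separate argument.
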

 \begin{proof}
   Fix $(x,\omega)\in Q_{\alpha,p}$ and let
   \begin{small}
     \begin{equation}\label{eq:01}
       X_s^j=\Za g(x {+} \frac \alpha p s, \omega {-} \beta j) e^{2i\pi
         \frac {js}q} = \sum_n g \big(x {+} \frac \alpha p (s {-}
       pn)\big) e^{2i\pi \alpha n \omega} e^{2i\pi j (\frac s q
         {-} \alpha \beta n)}
     \end{equation} 
   \end{small}
   be the corresponding element of the matrix $\mathcal Q
   (x,\omega)$.\\ \indent For each $s=0,1,\ldots \ , p-1$,
   $t=0,1,\ldots \ , q-1$ let
   \begin{equation*}
     L(s):=\{l: l=s-pn, n\in \Z\}, \ L(s,t)=\{l\in L(s): l=t+mq, m\in
     \Z\}.
   \end{equation*}
   Setting $l=s-pn$ in \eqref{eq:01} we obtain
   \begin{align}\label{eq:02}
     X_s^j=\sum_{l\in L(s)}g
     \big(x+\frac \alpha p
     l\big)& e^{2i\pi \frac {jl}q +2i \pi \alpha \omega \frac {s-l}p}\notag\\ &=
     \sum_{t=0}^{q-1} e^{2i \pi \alpha \omega \frac s p} \sum_{l\in
       L(s,t)}{g (x+\frac \alpha p l) e^{- 2i\pi\alpha \omega
         \frac l p} } e^{2i\pi \frac {jt}q}.
   \end{align} 
   For each $t\in \{0, \ldots \, q-1\}$ and $s\in \{ 0, \ldots \
   ,p-1\}$ we chose numbers $k_t \in \{0, \ldots \, q-1\}$ and $m_s
   \in \{ 0, \ldots \ ,p-1\}$ such that
   \begin{equation*} 
     k_t p = t \mbox{(mod $q$)}, \quad m_s q = s \mbox{(mod $p$)}.
   \end{equation*}
   One can easily see that $L(s,t)=\{k_tp+m_sq - pq m : m\in \Z\}$, so
   one can rewrite \eqref{eq:02} as
   \begin{multline}
     \label{eq:03} X_s^j= e^{2i\pi \alpha \omega\frac{s-m_sq}p}
     \sum_{t=0}^{q-1} e^{2i \pi \alpha \omega k_t} e^{2i\pi k_t j
       \frac pq} \times \\ \underbrace { \sum_{m\in \Z} g\left (
         x+\frac \alpha p(k_tp +m_sq) - m\alpha q \right ) e^{2i \pi
         \omega m \alpha q} }_ {= Z_{\alpha q}g(x+\frac \alpha p
       (k_tp+m_sq), \omega)}.
   \end{multline}
   \indent Since the numbers $k_t$ runs through the set $\{0, \ldots
   \, q-1\}$ as $t$ runs through this set, we can rewrite
   \eqref{eq:03} as
   \begin{multline*}
     X_s^j= e^{2i\pi \alpha \omega\frac{s-m_sq}p} \sum_{\tau =0}^{q-1}
     e^{2i \pi \alpha \omega \tau} e^{2i\pi \tau j \frac pq} \times \\
     \underbrace { \sum_{m\in \Z} g\left ( x+\frac \alpha p(\tau p
         +m_sq) - m\alpha q \right ) e^{2i \pi \omega m \alpha q} }_
     {= Z_{\alpha q}g(x+\frac \alpha p (\tau p+m_sq), \omega)},
   \end{multline*}
   or
   \begin{equation*}
     \mathcal Q (x,\omega)= \mbox{diag}\{e^{2i\pi \alpha
       \omega\frac{s-m_sq}p} \}_{s=0}^{p-1} \ \tilde{\mathcal P}(x,\omega) \
     \mbox{diag}\{ e^{2i \pi \alpha \omega \tau} \}_{\tau=0}^{q-1} \ W,
   \end{equation*} 
   here
   \begin{equation*}
     \tilde {\mathcal P}(x,\omega)= \left ( Z_{\alpha q}g(x+\frac
       \alpha p (\tau p+m_sq), \omega) \right )_{s=0 \ \tau=0}^{p-1 \ q-1},
     \quad W= \left ( e^{2i\pi \tau j \frac pq} \right )_{\tau,j=0}^{q-1}.
   \end{equation*}
   Clearly the matrices $ \tilde {\mathcal P}(x,\omega)$ and $
   {\mathcal Q}(x,\omega)$ have the same rank. On the other hand,
   since the number $m_s$ runs through the whole set $\{0,1,\ldots \ ,
   p-1\}$ as $s$ runs through this set the matrices $ \tilde {\mathcal
     P}(x,\omega)$ and $ {\mathcal P}(x,\omega)$ differ only by
   permutations of their rows and hence have the same rank.
 \end{proof}
\section{Example}\label{S:V}
In \cite{kG08} it is proved that the Gabor system
$\mathcal{G}(h_1,\alpha,\beta)$ is a frame if $\alpha \beta <
\frac{1}{2}$, and fails to be a frame if $\alpha \beta =
\frac{1}{2}$. Furthermore the authors discuss an example suggesting
that this result may be sharp.\\ \indent In this section we prove that
for at least some $\alpha,\beta$ with $\alpha \beta > \frac{1}{2}$ the
system $\mathcal{G}(h_1,\alpha,\beta)$ indeed generates a frame in
$L^2(\R)$. The proof uses the matrix-function $\mathcal{P}$
constructed in the previous section, and also a result on
\textit{diagonally-dominant} matrices.
\begin{Th}\label{35} 
  Let $\alpha \beta = 3/5$ and $h_1(t)=te^{-\pi t^2}$.  Then the
  system $\mathcal{G}(h_1,\alpha,\beta)$ is a frame in $L^2(\R)$.
\end{Th}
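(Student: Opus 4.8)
The plan is to verify the rank condition of Theorem \ref{th:LNrep}: with $\alpha\beta = 3/5$ we have $p=3$, $q=5$, so $\mathcal P(x,\omega)$ is the $3\times 5$ matrix whose entries are samples $Z_{\alpha q}h_1(x+\frac{\alpha}{3}(3t+5s),\omega)$, and the system is a frame precisely when this matrix has rank $3$ for every $(x,\omega)\in Q_{\alpha,3}$. First I would normalize by choosing $\alpha$ explicitly (say $\alpha=1$, $\beta=3/5$, so $\alpha q = 5$) and write out the entries of $\mathcal P$ explicitly as Zak transforms $Z_5 h_1$ evaluated at the sample points $x+\frac{1}{3}(3t+5s)$. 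Because $h_1(t)=te^{-\pi t^2}$ is a Gaussian times a linear factor, the Zak transform $Z_5 h_1$ can be written in closed form via theta-type series; I would record the needed magnitude and phase estimates, in particular the rapid Gaussian decay of the summands, so that each entry is well-approximated by its dominant term plus exponentially small corrections.

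The key analytic idea is the \emph{diagonally-dominant} matrix result advertised in the section's preamble. Since $\mathcal P$ is $3\times 5$ (more columns than rows), I would first select three columns forming a $3\times 3$ submatrix $M(x,\omega)$ and show this square submatrix is invertible by proving it is diagonally dominant: for each row $r$, the modulus of the diagonal entry $|M_{rr}|$ strictly exceeds the sum $\sum_{s\neq r}|M_{rs}|$ of the off-diagonal moduli in that row, uniformly over $(x,\omega)\in Q_{\alpha,3}$. Diagonal dominance then forces $\det M\neq 0$ by the Levy--Desplanques/Gershgorin theorem, giving rank $\mathcal P = 3$. The natural choice of the three columns should be the ones whose sample points $x+\frac{1}{3}(3t+5s)$ lie closest to the origin for each residue class $s\in\{0,1,2\}$, since those carry the largest Gaussian weight and hence the dominant diagonal entries.

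The main obstacle, as I see it, will be establishing the diagonal dominance \emph{uniformly} over the whole rectangle $Q_{\alpha,3}$, rather than merely at the convenient point $(0,0)$. Near generic interior points the Gaussian decay makes dominance easy, but there will be a troublesome set of points where two sample locations are nearly equidistant from the concentration center, so the chosen diagonal entry is no longer clearly the largest; there the naive column choice fails and one must either switch to a different $3\times 3$ submatrix or refine the estimate using the phase factors $e^{2\pi i n\alpha\omega}$ and the sign structure coming from the odd factor $t$ in $h_1$. I would therefore cover $Q_{\alpha,3}$ by a finite family of regions and, on each region, designate a submatrix whose diagonal dominance can be certified by the Gaussian tail bounds; on the overlaps one checks consistency. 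The quantitative part of this argument — pinning down the explicit numerical constants in the theta-series estimates so that the strict inequality genuinely holds on all of $Q_{\alpha,3}$ — is where the real work lies, and it is presumably also the step for which the paper supplements the analytic bounds with numerical verification.
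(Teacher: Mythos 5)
Your strategy is, in outline, exactly the paper's: reduce to the rank condition of Theorem \ref{th:LNrep}, approximate each Zak-transform entry by its dominant Gaussian term, select a $3\times 3$ submatrix whose largest samples sit on the diagonal, certify invertibility by a diagonal-dominance theorem, and cover the fundamental domain by finitely many regions with a different column selection on each (the paper uses columns $t=1,2,3$ for $0\le x<\alpha/12$ and $t=0,2,3$ for $\alpha/12\le x\le \alpha/6$, then disposes of the remaining $x$ by symmetry and quasi-periodicity).

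However, your first step opens a genuine gap: you cannot ``normalize by choosing $\alpha$ explicitly (say $\alpha=1$)''. The theorem asserts the frame property for \emph{every} pair $(\alpha,\beta)$ with $\alpha\beta=3/5$, and for a fixed window this property depends on $\alpha$ and $\beta$ separately, not only on their product: the dilation carrying the lattice $\alpha\Z\times\beta\Z$ to $\Z\times\frac{3}{5}\Z$ carries $h_1$ to the \emph{different} window $\lambda^{1/2}h_1(\lambda\,\cdot\,)$, so proving the case $\alpha=1$ says nothing about the other points of the hyperbola. The only symmetry actually available is $(\alpha,\beta)\mapsto(\beta,\alpha)$, coming from the fact that $h_1$ is an eigenfunction of the Fourier transform; this is what the paper uses, and it reduces the problem only to the one-parameter family $\alpha\ge\sqrt{3/5}$, over which every dominance inequality must then be established uniformly in $\alpha$. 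Restoring this range also undercuts your choice of criterion: on $\sqrt{3/5}\le\alpha\le 1$ the paper does not verify strict row dominance (the Levy--Desplanques condition you propose) but the weaker Taussky/Ostrowski pairwise condition $|a_i^i||a_j^j|>\bigl(\sum_{k\neq i}|a_k^i|\bigr)\bigl(\sum_{k\neq j}|a_k^j|\bigr)$, and only for $\alpha\ge 1$ does the strict row-dominance inequality go through analytically. So the missing ingredients are the Fourier-transform reduction plus estimates uniform over $\alpha\ge\sqrt{3/5}$; with $\alpha$ frozen at $1$ your argument, even if completed, proves a strictly weaker statement than the theorem.
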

\begin{proof} The matrix $\mathcal P (x,\omega)$ takes the form
  \begin{equation*}
    \mathcal P_1 (x,\omega)= \left (\Zaf h_1(x+ \alpha
      t+s\frac{5\alpha}{3}, \omega) \right )_{s,t=0 }^{2, \ 4}, 
  \end{equation*} 
  By Theorem \ref{th:LNrep} and Corollary \ref{cor:frameb} it suffices
  to prove that
  \begin{equation*} 
    \rank \mathcal P_1 (x,\omega)=3, \quad \mathrm{for\ all}\
    (x,\omega) \in Q_{\alpha,3}.
  \end{equation*} 
  \indent We split the proof into several steps.\medskip\\ {\bf a. \ }
  It suffices to prove that $\mathcal{G}(h_1,\alpha,\beta)$,
  $\alpha\beta = \frac{3}{5}$ is a frame for $\alpha \geq
  \sqrt{\frac{3}{5}}$. The case $\beta \geq \sqrt{\frac{3}{5}}$ can be
  reduced to the previous by using the Fourier transform.\medskip\\
  {\bf b. \ } Since the function $h_1$ decays fast one can approximate
  $\Zaf h_1(x,\omega)$ with the maximal term of the series.  In
  particular the following holds
  \begin{Lemma}\label{l:approximation} Let $0\leq |x| <
    \frac{5\alpha}{2}$. Then
    \begin{equation*}
      |h_1(x) - \Zaf h_1(x,\omega)| \leq C_{5\alpha}h_1(5\alpha - |x|)
    \end{equation*}
    where $C_{5\alpha} = 2 + \frac{1}{h(5\alpha)} \sum_{n\geq 2}
    h_1(5\alpha n) + h_1(\frac{5\alpha(2n - 1)}{2})$.
  \end{Lemma}
  This lemma can be verified directly. For all practical reasons we
  can assume that $C_{5\alpha} = 0$.\medskip\\ {\bf c. \ } We will see
  later in {\bf (g)} that it suffices to consider $0\leq x \leq
  \frac{\alpha}{6}$. This will follow from symmetry of $h_1$ and
  quasi-periodicity of the Zak transform.  We split the interval
  $0\leq x \leq \frac{\alpha}{6}$ in two: $0\leq x< \frac{\alpha}{12}$
  and $\frac{1}{12} \leq x \leq \frac{\alpha}{6}$.\medskip\\ {\bf d.
    \ } Let $0\leq x<\frac{\alpha}{12}$.  Consider the sub-matrix
  of $\mathcal P_1 (x,\omega)$ corresponding to $t=1,2,3$. After
  interchanging of the second and third row this matrix takes the form
  \begin{align*}
    \begin{pmatrix} 
      \Zaf h_1(x+\alpha,\omega) &
      \Zaf h_1(x+2\alpha,\omega) &
      \Zaf h_1(x+3\alpha,\omega)\\
      \Zaf h_1(x+ \frac{13\alpha}{3},\omega) & \Zaf h_1(x+
      \frac{16\alpha}{3},\omega) & \Zaf h_1(x+
      \frac{19\alpha}{3},\omega) \\
      \Zaf h_1(x+ \frac{8\alpha}{3},\omega) &
      \Zaf h_1(x+\frac{11\alpha}{3},\omega) &
      \Zaf h_1(x+\frac{14\alpha}{3},\omega)
    \end{pmatrix}.
  \end{align*}    
  We will use the following theorem about diagonally dominant
  matrices.\medskip\\
  \noindent{\bf Theorem B} (see \cite{oT49}). {\em If $(a_i^k)$ is an
    $n\times n$-matrix with complex elements such that either}
  \begin{itemize}
    \item[(i):] \qquad$|a_i^i| > \sum_{k,k\neq i}|a_k^i|, \qquad
      \qquad \qquad \qquad \qquad \;\;\; 1 \leq i \leq n$
    \end{itemize}
    {\em or}
    \begin{itemize}
    \item[(ii):] \qquad$|a_{i}^i||a_{j}^j| > \Big(\sum_{k,k\neq
        i}|a_{k}^i|\Big) \Big(\sum_{k,k\neq j}|a_{k}^j|\Big),\qquad
      1\leq i,j \leq n,\;i\neq j$
    \end{itemize}
    {\em then $\mathrm{det}(a_{ik}) \neq 0$.}
  \medskip\\
  \noindent{\bf e. \ } We will check that Theorem B can be used to
  prove invertibility of the matrix in \textbf{(d)}. We emphasis the
  main steps of the proof.
  \begin{itemize}
  \item Using the fact that $|\Zaf h_1(x,\omega)|$ is
    $5\alpha$-periodic function one can represent the absolute values
    of the elements of the matrix in \textbf{(e)} as
    \begin{align*}
  \begin{pmatrix} 
    |\Zaf h_1(x + \alpha,\omega)| & |\Zaf h_1(x +
    2\alpha,\omega)| & |\Zaf h_1(x - 2\alpha,\omega)|\\
    |\Zaf h_1(x - \frac{2\alpha}{3},\omega)| & |\Zaf h_1(x
    + \frac{\alpha}{3},\omega)| & |\Zaf h_1(x +
    \frac{4\alpha}{3},\omega)|\\ |\Zaf h_1(x -
    \frac{7\alpha}{3},\omega)| & |\Zaf h_1(x -
    \frac{4\alpha}{3},\omega)| & |\Zaf h_1(x -
    \frac{\alpha}{3},\omega)|
  \end{pmatrix}.
\end{align*}
\item Using Lemma \ref{l:approximation} one can replace all elements,
  except those with indices $(1,2), (1,3)$ and $(3,1)$ with the main
  term of the corresponding series. We use an upper bound for the
  remaining (non-diagonal) elements replacing $\Zaf h_1(x -
  \frac{7\alpha}{3},\omega)$ and $\Zaf h_1(x \pm 2\alpha,\omega)$ with
  respectively $3h_1(x - \frac{7\alpha}{3})$ and $3h_1(x \pm
  2\alpha)$. Namely, for $x$ near $\frac{5 \alpha}{2}$ Lemma
  \ref{l:approximation} gives
  \begin{align*}
    |\Zaf h_1(x,\omega)| {=} |h_1(x) {-} \Zaf h_1(x,\omega) {-}
    h_1(x)| {\leq} h_1(x) {+} C_{5\alpha} h_1(5\alpha {-} |x|).
  \end{align*}
  Since the constant $C_{5\alpha} \approx 2$ we use the estimate
  $|\Zaf h_1(x,\omega)| \leq 3h_1(x)$ for $x$ near
  $\frac{5\alpha}{2}$. Since we plan to use Theorem B we use an upper
  estimate for the non-diagonal elements. For the other values of $x$
  the error is negliable.\\ \indent \quad We will show that the
  following matrix satisfies the conditions of Theorem B.
  \begin{align*}
    \Big(H_i^j\Big)_{i,j=1}^3 =
    \begin{pmatrix} |h_1(x + \alpha)| & 3|h_1(x +
      2\alpha)| & 3|h_1(x - 2\alpha)|\\ |h_1(x -
      \frac{2\alpha}{3})| & |h_1(x + \frac{\alpha}{3})|
      & |h_1(x + \frac{4\alpha}{3})|\\ 3|h_1(x -
      \frac{7\alpha}{3})| & |h_1(x - \frac{4\alpha}{3})|
      & |h_1(x - \frac{\alpha}{3})|
    \end{pmatrix}.
  \end{align*}
\item For $0 \leq x \leq \frac{\alpha}{12}$ and $\sqrt{\frac{3}{5}}
  \leq \alpha \leq 1$ condition (ii) in Theorem B can now be verified
  by direct inspection.
\item For $0 \leq x \leq \frac{\alpha}{12}$ and $\alpha \geq 1$ we
  consider the difference 
  \begin{align*}
    H_i^i - \sum_{k,k\neq i} H_i^k,\qquad \qquad 1 \leq i \leq 3.
  \end{align*}
  If this expression is positive then condition (i) in Theorem B
  is met. Consider first the case $i=1$. Then we need to verify
  \begin{equation}\label{eq:H1}
    H_1^1 - H_1^2 - H_1^3 > 0.
  \end{equation}
  Let $\alpha y = x$ for $0 \leq y \leq \frac{1}{12}$. Then \eqref{eq:H1}
  is equivalent to
  \begin{align*}
    &h_1(\alpha (y + 1)) - 3h_1(\alpha (y + 2)) - 3h_1(\alpha (2 - y))\\ &>
    h_1(\alpha (\frac{1}{12} + 1)) - 3h_1(2 \alpha ) - 3h_1(\alpha (2 -
    \frac{1}{12}))\\ &> h_1(\frac{13\alpha}{12}) -
    6h_1(\frac{23\alpha}{12}) = h_1(\frac{13\alpha}{12})\Big(1 -
    6\frac{23}{13}e^{-\frac{5\pi \alpha^2}{2}}\Big)
  \end{align*}
  which clearly is positive for all values $\alpha \geq 1$. The proof
  of the cases $i=2$ and $i=3$ follows along the same lines.
\end{itemize}
\medskip {\bf f. \ } The case $1/12\leq x \leq 1/6$ and $\alpha \geq
\sqrt{\frac{3}{5}}$ can be treated similarly to the previous case. In
this case one can verify that the submatrix of $\mathcal{P}_1
(x,\omega)$ corresponding to the columns $t=0,2,3$ fulfill condition
$(i)$ of
Theorem B.\medskip\\
{\bf g. \ } We prove that the case $\frac{\alpha}{6}\leq x \leq
\frac{\alpha}{3}$ can be reduced to the previous ones.  By
substituting $x = \frac{\alpha}{3} - y$ we have
\begin{align*}
  |\Zaf  h_1(\frac{\alpha}{3}-y + \alpha t +
  r\frac{5\alpha}{3},\omega)| &= |\Zaf  h_1(\frac{\alpha}{3}-y +
  \alpha t + \alpha(r - 1)\frac{5}{3} + \frac{5\alpha}{3},\omega)|\\
  &= |\Zaf h_1(y - \alpha(t + 2) - \alpha(r-1)\frac{5}{3},\omega)|.
\end{align*} 
It is clear that the two cases are - up to permutations of
rows/columns - similar for $0 \leq x \leq \frac{\alpha}{6}$ and
$\frac{\alpha}{6} \leq x \leq \frac{\alpha}{3}$.
\end{proof}
\section{Conjecture}\label{S:VI}
Calculations of the previous section become too cumbersome for
arbitrary $\alpha$,$\beta$ with $\alpha \beta \in \Q$, $\alpha \beta
<1$. Numerous numerical calculations make
us to believe that the following statement is true:\medskip\\
\noindent {\bf Conjecture\quad} {\em Let $\alpha \beta <1$ and $\alpha
  \beta \neq (n-1)/n$, $n=1,2,\ldots $,. Then
  $\mathcal{G}(h_1,\alpha,\beta)$ in a frame in $L^2(\R)$. }\medskip\\
Unfortunately the authors are at the moment not able to
prove/disprove this statement even in the case $\alpha \beta \in
\Q$.\\ \indent The following figure suggest that $\alpha\beta =
\frac{n-1}{n}$ are the only exceptional cases. Figure~\ref{fig:eig}
shows the minimal eigenvalue of
$\mathcal{P}(x,\omega)\mathcal{P}^T(x,\omega)$ for $0 \leq x \leq
\frac{\alpha}{2p}$ and $0 \leq \omega \leq \frac{1}{\alpha}$ for
$\alpha \beta = \frac{n-j}{n}$ for $5 \leq n \leq 201$ and $1 \leq j
\leq n-1$. With this choice of $n$ and $j$ we have $\alpha \beta <
0.995$. In the experiments we have considered $\alpha = 1$.
\begin{figure}[!h]
  \begin{center}
    \subfigure[]{\includegraphics[height=3cm,width=10cm]{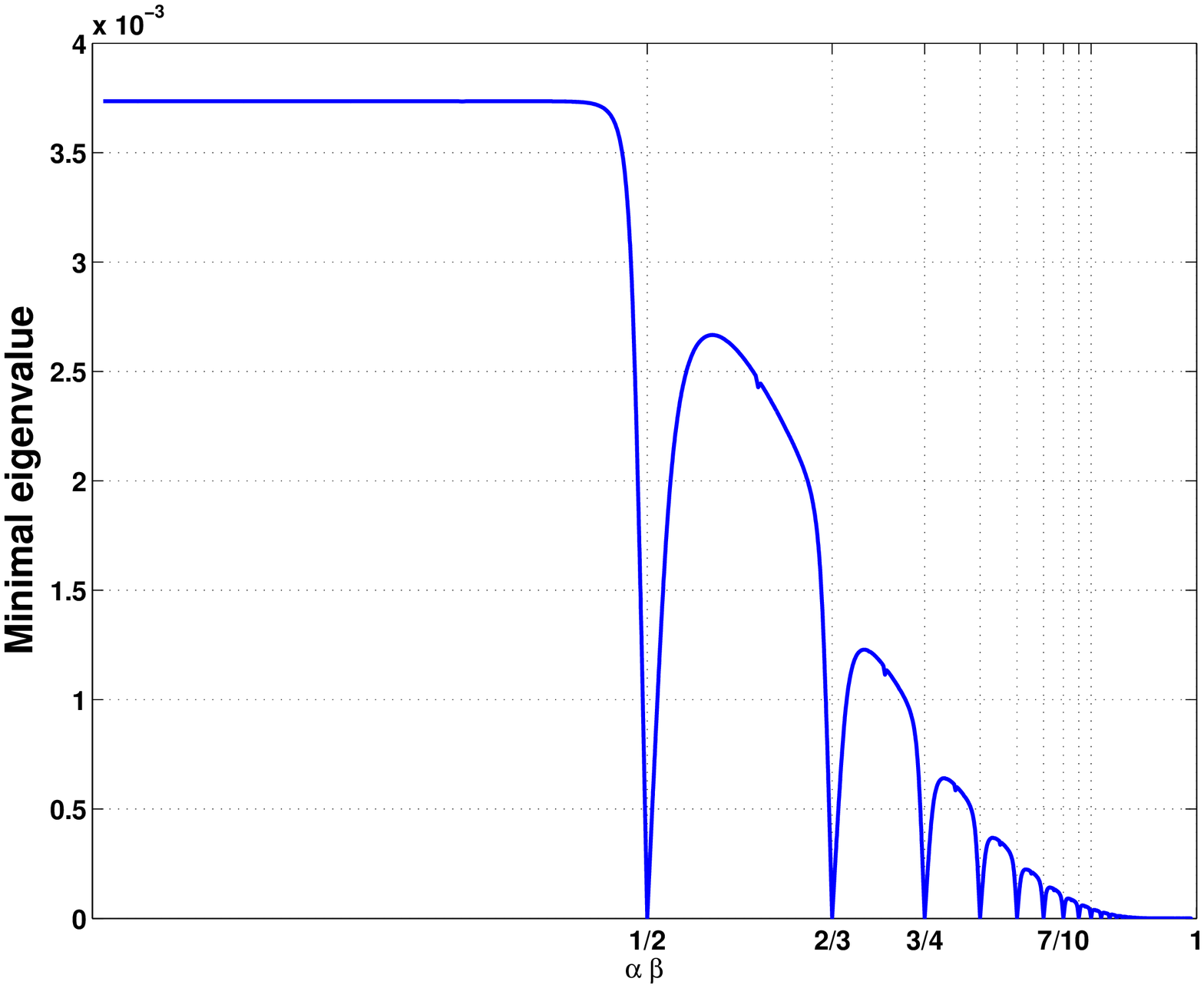}}\\
    \subfigure[]{\includegraphics[height=3cm,width=10cm]{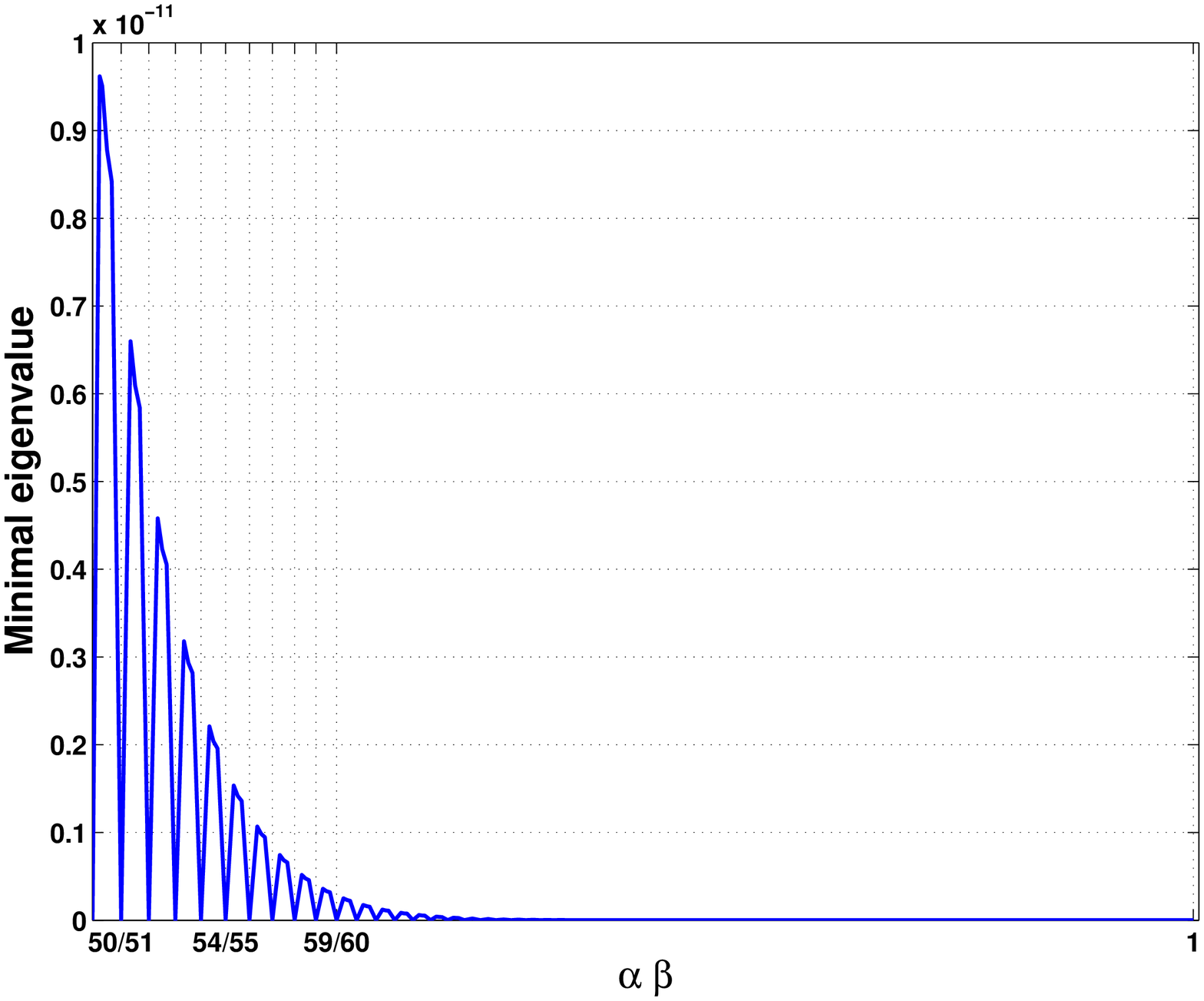}}
  \end{center}
  \caption{Minimal eigenvalue of $\mathcal{P}\mathcal{P}^T$ for (a)
    $\alpha \beta < 0.98$ and (b) $0.98 < \alpha \beta < 0.995$. In
    the experiments we have used $\alpha = 1$. Note that the scaling
    of the y-axis differs.}\label{fig:eig}
\end{figure}
\newpage
\bibliographystyle{plain}
\bibliography{bibgabor,biboperator}

Department of Mathematical Sciences, Norwegian University of Science
and Technology, N-7491 Trondheim, Norway.\\
\textit{E-mail address:} \texttt{Yurii.Lyubarskii@math.ntnu.no}\\
\textit{E-mail address:} \texttt{prebengr@math.ntnu.no}
\end{document}